\newtheorem{proposition}{Proposition}
\newtheorem{remark}{Remark}
\newtheorem{assumption}{Assumption}
\newcommand{\eod}{\ensuremath{\hfill\Box}}
\newcommand\fs@betterruled{%
	\def\@fs@cfont{\bfseries}\let\@fs@capt\floatc@ruled
	\def\@fs@pre{\vspace*{5pt}\hrule height.8pt depth0pt \kern2pt}%
	\def\@fs@post{\kern2pt\hrule\relax}%
	\def\@fs@mid{\kern2pt\hrule\kern2pt}%
	\let\@fs@iftopcapt\iftrue}
\newcommand{\bs}{\boldsymbol}
\newcommand{\mc}{\mathcal}
\newcommand{\bb}{\mathbb}
\newcommand{\R}{\bb R}
\newcommand{\avg}{\textrm{avg}}
\newcommand{\argmin}{\operatorname{argmin}}
\newcommand{\proj}{\mathrm{proj}}
\newcommand{\diag}{\operatorname{diag}}
\newcommand{\col}{\operatorname{col}}
\newcommand{\0}{\mathbf{0}}
\newcommand{\1}{\mathbf{1}}
\title{
Operationally-Safe Peer-to-Peer Energy Trading in Distribution Grids: A Game-Theoretic Market-Clearing Mechanism
}
\author{Giuseppe Belgioioso, Wicak Ananduta, Sergio Grammatico, \emph{Senior Member, IEEE,} \\ and Carlos Ocampo-Martinez, \emph{Senior Member, IEEE}
\thanks{
	G. Belgioioso is with the Automatic Control Laboratory, ETH Zurich, Switzerland. W. Ananduta and S. Grammatico are with the Delft Center for Systems and Control (DCSC), TU Delft, The Netherlands. C. Ocampo-Martinez is with the Automatic Control Department, Universitat Polit\`{e}cnica de Catalunya, Barcelona, Spain. 
	E-mail addresses: \texttt{gbelgioioso@ethz.ch}, \texttt{carlos.ocampo@upc.edu} \texttt{\{w.ananduta,s.grammatico\}@tudelft.nl}. This work was partially supported by NWO under research projects OMEGA (grant n. 613.001.702), P2P-TALES (grant n. 647.003.003), the ERC under research project COSMOS (802348) and PID2020-115905RB-C21 (L-BEST) funded by MCIN/ AEI /10.13039/501100011033. 
}
}
\begin{document}

\maketitle

\begin{abstract}
In future distribution grids, prosumers (i.e., energy consumers with storage and/or production capabilities) will trade energy with each other and with the main grid.
To ensure an efficient and safe operation of energy trading, in this paper, we formulate a peer-to-peer energy market of prosumers as a generalized aggregative game, in which a network operator is responsible to enforce the operational constraints of the system.
We design a distributed market-clearing mechanism with convergence guarantee to an economically-efficient, strategically-stable, and operationally-safe configuration (i.e., a variational generalized Nash equilibrium). 
Numerical studies on the IEEE 37-bus testcase show the scalability of the proposed approach and suggest that active participation in the market is beneficial for both prosumers and the network operator.  
\end{abstract}
\vspace{-5pt}
\begin{IEEEkeywords}
	Prosumers, energy management, distributed algorithm, generalized Nash equilibrium
\end{IEEEkeywords}
\vspace{-10pt}
\section*{Nomenclature}
\begin{table}[h]
	\small 
	\begin{tabular}{l c l}
		\toprule
		\multicolumn{3}{c}{Variables and Cost Functions} \\
		\hline
		$f^{\mathrm{di}}$ & [\euro] & cost of the \underline{di}spatchable units  \\
		$f^{\mathrm{mg}}$ & [\euro] & cost of trading with the \underline{m}ain \underline{g}rid  \\
		$f^{\mathrm{st}}$ & [\euro] & cost  of the \underline{st}orage units  \\
		$f^{\mathrm{tr}}$ & [\euro]& cost  of \underline{tr}ading with other prosumers  \\
		$J$ &[\euro]  & total cost function of each prosumer \\
		$\lambda^{\mathrm{mg}}$ &[\euro/kWh] & dual variable for grid trading constraints 		\\
		$\mu^{\mathrm{pb}}$ & [\euro/kWh]& dual variable for power balance constraints \\
		$\mu^{\mathrm{tg}}$ &[\euro/kWh] & dual variable for grid physical constraints
		\\
		$\mu^{\mathrm{tr}}$ &[\euro/kWh] & dual variable for reciprocity constraints \\
		$p^{\mathrm{di}}$ & [kW]& power generated by {di}spatchable units  \\
		$p^{\ell}$ &[kW] & real power line of two neighboring busses  \\
		$p^{\mathrm{mg}}$ &[kW] & power traded with the {m}ain {g}rid  \\
		{$p^{\mathrm{ch}}$} &[kW] & {charging power of the {st}orage units}  \\
		{$p^{\mathrm{ds}}$} &[kW] & {discharging power of the {st}orage units}  \\
		$p^{\mathrm{tg}}$ &[kW] & power exchanged between bus and main grid \\	
		$p^{\mathrm{tr}}$ & [kW]& power {tr}aded with another prosumer  \\
		$q^{\ell}$ &[kVAr] & reactive power line   \\
		$\sigma^{\mathrm{mg}}$ &[kW] & aggregate of active load on the main grid \\
		$v$ &p.u. & voltage magnitude \\
		$x$ & [$\%$] & state of charge of the storage units\\
		$\theta$ & [rad]& voltage angle \\
		\hline 

	\end{tabular}
\end{table} 
\smallskip
\begin{table}[!h]
	\small 
	\begin{tabular}{l c l}
				\toprule
		\multicolumn{3}{c}{Parameters} \\
		\hline
		$\alpha, \beta, \gamma$ &- & step sizes of the proposed algorithm \\
		$b$ & [kW]& aggregate of passive consumer demand \\
		$B$ &[ohm$^{-1}$] & line susceptance \\
		$ c^{\mathrm{di}}$ &[\euro/kWh] & linear coefficient (coeff.) on the cost \\ & & of dispatchable units (DU) \\
		$c^{\mathrm{ta}}$ & [\euro/kWh]& trading tariff \\
		$c^{\mathrm{tr}}$ &[\euro/kWh] & per-unit cost of trading \\
		$d^{\mathrm{mg}}$ & [\euro/kWh$^2$]& coeff. on the cost of trading with \\
		& & the main grid \\
		$e^{\mathrm{cap}}$ &[kWh] & max. capacity of the storage units \\
		{$\eta^{\mathrm{st}}$} & - & {leakage coefficient of storage units}\\
		{$ \eta^{\mathrm{ch}}$}, {$ \eta^{\mathrm{ds}}$} & - & {charging and discharging efficiencies}\\
		$G$ & [ohm$^{-1}$]  & line conductance \\
		$H$ &- & time horizon \\
		$\bar p^{\mathrm{ch}}$ &[kW]  & max. \underline{ch}arging power of the storages \\
		$p^{\mathrm{d}}$ &[kW]  & power \underline{d}emand 
		\\
		{$\bar p^{\mathrm{ds}}$} &[kW] & max. \underline{d}i\underline{s}c{h}arging power of the storages \\
		$\overline{p}^{\mathrm{di}}, \underline{p}^{\mathrm{di}}$ &[kW]  & max. and min. power generated by DU\\
		$\overline{p}^{\mathrm{mg}}, \underline{p}^{\mathrm{mg}}$ &[kW]  & max. and min. total power  \\ & & traded with the main grid\\
		$\overline{p}^{\mathrm{tr}}$ &[kW]  & max. power traded between prosumers \\
		$Q^{\mathrm{di}}$ &[\euro/kWh$^2$] & quadratic coeff. on the cost of DU \\
		$Q^{\mathrm{st}}$ & [\euro/kWh$^2$]& coeff. on the cost of storage units \\
		$\overline{s}$ & [kVA] & max. line capacity \\
		$T_s$ & [hour] & sampling time \\
		$\overline{v}, \underline{v}$ & p.u. & max. and min. voltage magnitude \\
		$\overline{x},\underline{x}$ &p.u. & max. and min. state of charge \\
		$\overline{\theta}, \underline{\theta}$ & [rad] & max. and min. voltage angle \\
		\hline
		\toprule
		\multicolumn{3}{c}{Sets}  \\
		\hline 
		$\mc B$ & & set of busses in the electrical network \\
		$\mc B^{\mathrm{mg}}$ & & set of busses connected to main grid\\
		$\mathcal{C}$ & & coupling constraint set \\
		$\mc E$ & & set of links in the trading network\\
		$\mathcal{G}^{\mathrm{t}}$ & & graph representing trading network \\
		$\mathcal{G}^{\mathrm{p}}$ & & graph representing physical network \\
		$\mc H$ & & set of discrete-time indices \\
		$\mc L$ & & set of power lines (links)\\
		$\mathcal{N}$ & & set of prosumers \\
		$\mathcal{N}^+$ & & set of prosumers and network operator\\
		$\mathcal{N}_i$ & & set of trading partners of prosumer $i$ \\
		$\mathcal{N}_y^{\mathrm{b}}$ & & set of prosumers of bus $y$ \\
		$\mc P$ & & set of passive consumers \\
		$\mc P_y^{\mathrm{b}}$& &set of passive consumers of bus $y$ \\
		$\mathcal{U}$ & & local constraint set \\ 
		\bottomrule
		
	\end{tabular}
	
\end{table} 


	\section{Introduction}

In recent years, there has been a fast growing penetration of distributed and renewable energy sources as well as storage units in distribution networks \cite{parag2016}. The parties who own these devices are called prosumers, i.e., energy consumers with production and/or storage capabilities. Unlike traditional consumers, prosumers can have a prominent role in achieving energy balance in a distribution network, since they can contribute to energy supply.  Therefore, currently there is a large research effort to study potential evolutions of electricity markets and decentralized energy management mechanisms that can enable active participation of prosumers \cite{parag2016,tushar2018,liu2019,sousa2019}.

Focusing on spot markets, i.e., day-ahead and intra-day markets, each prosumer has to decide its energy production and consumption over a certain time horizon, with the objective of minimizing its own expenses while satisfying its physical and operational constraints. 
Most of existing works formulate such peer-to-peer (P2P) markets via game-theoretic or multi-agent optimization frameworks \cite{tushar2018,tushar2020,lecadre2020,cui2020,zhang2020,baroche2019,sorin2019}. For instance, the authors of \cite{tushar2018} provide a literature survey of early works on game-theoretic P2P market models. {More recently, \cite{tushar2020} considers a coalition game approach for peer-to-peer trading of prosumers with storage units.}  Furthermore, \cite{lecadre2020,cui2020,zhang2020,baroche2019,sorin2019} propose economic dispatch formulations where energy trading is incorporated as coupling (reciprocity) constraints and each prosumer has local decoupled objectives. 

Generalizing the previous papers, our preliminary work in  \cite{belgioioso2020energy} does not only consider multi-bilateral trading  but also trading with the main grid, which extends the coupling to both constraints and objective functions. 
Mathematically, clearing the resulting P2P market corresponds to finding a generalized Nash equilibrium (GNE), namely, a configuration in which no prosumer has an incentive to unilaterally deviate. 
Similarly, \cite{wang2021} formulates a generalized Nash game of energy sharing or a multilateral (instead of bilateral) trading among prosumers, and proposes a distributed algorithm to find a solution of the market equilibrium problem. 
In parallel, we note that operator-theoretic approaches have been effectively exploited to design distributed algorithms that efficiently solve GNE problems under the least restrictive assumptions \cite{paccagnan2019,belgioioso2020b,gadjov2020single,belgioioso2020semi,
	bianchi2020fast}.

In practice, however, direct trading among prosumers might jeopardize system reliability, for which network operators are responsible. Therefore, when designing energy management mechanisms for a distribution grid, one must also consider the role of network operators and the reliability of the system itself.
For example, \cite{qin2018,morstyn2019} treat decentralized markets and operational reliability separately, and propose market-clearing mechanisms where decentralized market solutions must be approved by a network operator based on the system operational constraints. 
An alternative is based on incorporating network charges, which may reflect utilization fees and network congestion, into the market formulation, as discussed in \cite{baroche2019b,paudel2020}. Differently, \cite{moret2020,zhang2020} include network operators as players in the market and impose network operational requirements as constraints in the market problem, which is formulated as a multi-agent optimization. {A similar approach is considered in \cite{zhong2020cooperative}, which employs generalized Nash bargaining theory and decomposes the problem into two hierarchical subproblems (a social welfare maximization and an energy trading problem).} 

In this paper, we consider a P2P energy market in which each prosumer is capable of not only generating and storing energy but also directly trading with other prosumers as well as with the main grid. Similarly to \cite{moret2020}, we include a network operator, whose objective is to ensure safe and reliable operation  of the system. 
However, we formulate the market clearing as a GNE problem, in which the players (i.e., prosumers and network operators) have coupling objective functions and constraints (Section \ref{sub:setup}). {Our market formulation extends the preliminary work \cite{belgioioso2020energy} by including network operational constraints and system operators in the model, which complicate the analysis as we need to exploit the problem structure to derive an efficient algorithm.}

{
The main advantage of our decentralized market design is that its equilibria are not only economically-optimal but also strategically-stable (i.e., no prosumer has any incentive to unilaterally deviate), operationally-safe and reliable (i.e., the network operational requirements are met), and socially-fair (i.e., the marginal loss for satisfying the grid constraints is the same for each prosumer). 
Furthermore, we design a provably-convergent, scalable and distributed market-clearing algorithm based on the proximal-point method for monotone inclusion problems	\cite[\S~23]{bauschke2011convex} (Section \ref{sub:alg}). 
Finally, we investigate via extensive numerical studies: (i)  the effectiveness of the proposed market framework; (ii) the impact of distributed generation, storage and P2P tradings in distribution grids; and (iii) the scalability of the proposed market-clearing mechanism with respect to both the number of prosumers and the number of P2P tradings in the distribution network (Section \ref{sub:case_std}).
}

\smallskip
\subsubsection*{Notation}
$\R$ denotes the set of real numbers, $\bb N$ denotes the set of natural numbers, and  $\bs{0}$ ($\bs{1}$) denotes a matrix/vector with all elements equal to $0$ ($1$). 
$A \otimes B$ denotes the Kronecker product between the matrices $A$ and $B$. For a square matrix $A \in \R^{n \times n}$, its transpose is $A^\top$, $[A]_{i,j}$ represents the element on the row $i$ and column $j$. $A \succ 0$ ($\succcurlyeq 0$) stands for positive definite (semidefinite) matrix. 
{For any $x \in \bb R^n$, $\|x\|_A^2 = x^\top A x,$ with square symmetric matrix $A \succ 0$}. 
For a closed set $S \subseteq \R^n$, the mapping $\proj_{S}:\R^n \rightarrow S$ denotes the projection onto $S$, i.e., $\proj_{S}(x) = \argmin_{y \in S} \left\| y - x\right\|$.

\section{Peer-to-peer markets as a generalized Nash equilibrium problem}
\label{sub:setup}

We denote a group of $N$ prosumers connected in a distribution network by the set $\mc N = \{1,2,\dots, N \}$. Each prosumer might have the capability of producing, storing, and consuming power, depending on their devices and assets. Furthermore, each prosumer might also trade power directly with the main grid and with (some of) the other prosumers, which we will refer to as \textit{trading partners}. The trading partners of an agent might be defined based on geographical location or on bilateral contracts \cite{sousa2019}. 
We model the trading network of prosumers as an undirected graph  $\mc G^{\mathrm{t}} =(\mc N,\mc E)$, where $\mc N $ is the set of vertices (agents) and $\mc E \subseteq \mc N \times \mc N$ is the set of edges, with $|\mc E |= E$. The unordered pair of vertices $(i,j) \in \mc E$ if and only if agents $j$ and $i$ can trade power. The set of trading partners of agent $i$ is defined as $ \mc N_i = \{ j | \, (j,i)\in \mc E \}$.

Moreover, we also consider the electrical distribution network, to which the prosumers are physically connected. This network consists of a set of $B$ busses, denoted by $\mc B:=\{1,2,\dots,B\}$, connected with each other by a set of power lines, denoted by $\mc L\subseteq \mc B \! \times\! \mc B$. Thus, we represent the physical electrical network as a connected undirected graph $\mc G^{\mathrm{p}}=(\mc B, \mc L)$. In $\mc G^{\mathrm{p}}$, each prosumer  is connected to a bus and, in general, one bus may have more than one prosumer. Figure~\ref{fig:GGp} shows an example of trading and physical electrical networks. Furthermore, we assume that a distribution network operator (DNO) is responsible to maintain the reliability of the system, i.e., to ensure the satisfaction of the physical constraints of the electrical network \cite{qin2018,morstyn2019,moret2020}. 
\color{black}

\begin{figure}
	\centering
	\includegraphics[width=0.95\columnwidth]{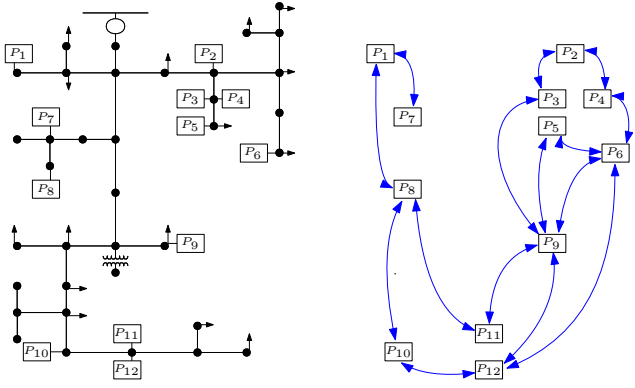}
\smallskip
	\caption{Left plot: A modified IEEE 37-bus network with 12 prosumers (boxes) and 15 passive loads (black triangles); busses are represented by black circles, physical lines in $\mc L$ by solid lines. Right plot: P2P trading network, where trading relations ($\mc E$) are represented by blue double-arrow lines.
	}
	\label{fig:GGp}
\end{figure} 

We focus on P2P spot markets, i.e., day-ahead and intra-day markets, similarly to \cite{sousa2019,lecadre2020,moret2020}. Thus, we denote the horizon of the decision profiles by $\mc H = \{1,2,\dots,H \}$. For instance, in a day-ahead market, typically, the sampling period is one hour and the time horizon is $H=24$ hours. Moreover, as in \cite{moret2020}, we also include the physical constraints of the distribution network to ensure that a solution is not only economically optimal but also meets the standards of the DNO.

Let us model such a P2P market as a generalized game. Specifically, we assume that each prosumer, or agent, $i\in \mc N$ aims at selfishly minimizing its cost function, which might involve decisions of other agents, subject to local and coupling constraints. Furthermore, we consider the DNO as an additional agent, i.e., agent $N\!+\!1$, whose only objective is to ensure the constraints of the physical network are met. 
In this regard, let $u_i \in \bb R^{n_i}$ denote the decision of agent $i$, for all $i\in \mc N^+ :=\{1,\dots,N\!+\!1\}$. 
Furthermore, we denote by $u$ the decision profile, namely, the stacked vectors of the decisions of all agents, i.e., $u:=\col(\{u_j\}_{j\in \mc N^+ })$, and by $u_{-i}$ the decision of all agents except agent $i$, i.e., $u_{-i} = \col(\{u_j\}_{j\in \mc N^+ \backslash \{i\}})$.

Each agent $i$ is self-interested and wants to compute an optimal decision, $u_i^*$, that solves its local optimization problem%
\begin{subequations}
	\label{eq:locOpt}
	\begin{empheq}[left=u_i^*\in \empheqlbrace]{align}
	\label{eq:cost_gen}	
	& 
	\arg\min_{{u}_i }  \;   J_{i}\left(
	u_{i} , u_{-i} 	
	\right) \\[.2em]
	\label{eq:const_gen}
	&\quad \text{s.t. } \; \quad {u}_{i} \in  \mc U_i\\
	&\qquad \quad \; \; (u_i, {u}_{-i}) \in \mc C,
	\end{empheq}
\end{subequations} 
where $J_i$ is the cost function of agent $i$, $\mc U_i$ is the local constraint set, and $\mc C$ is the set of coupling constraints. 
In the remainder of this section, we describe $J_i$, $\mc U_i$, and $\mc C$, upon which we postulate standard assumptions, as formalized next.

\begin{assumption}
	\label{as:gen}
For each agent $i\in \mc N^+$, the function $J_i(\cdot,u_{-i})$ is convex and continuously differentiable, for all fixed $u_{-i}$; the set $\mc U_i$ is nonempty, closed and convex. The global feasible set $\bs {\mc X}:=\left( \prod_{i \in \mc N} \mc U_i \right) \cap \mc C $ satisfies the Slater's constraint qualification \cite[Eq. (27.50)]{bauschke2011convex}.  \eod
\end{assumption}

\subsection{Modelling the prosumers}
\label{sec:mod_comp}
In this section, we introduce the prosumer model. We consider that power might be generated by non-dispatchable generation units, e.g., solar and wind-based generators, or dispatchable units, e.g., small-scale fuel-based generators.
Moreover, we also consider the slow dynamics of storage units. We restrict the model of each component such that Assumption \ref{as:gen} holds, that is, we avoid non-convex formulations and provide a convex approximation instead. Not only this approach is common in the literature, see e.g.   \cite{moret2020,atzeni2013,molzahn2017}, but also practical especially for real-time implementation, which requires fast and reliable computations.
 
First, we suppose that the components of the decision vector of prosumer $i \in \mc N$, $u_{i}$, are the power generated from a \underline{di}spatchable unit ($p_{i}^{\mathrm{di}} \in \bb R^{H}$),	the {\underline{ch}arging and \underline{d}i\underline{s}charging } power of a storage unit {($p_{i}^{\mathrm{ch}}, p_{i}^{\mathrm{ds}} \in \bb R^{H}$)}, the power traded with the \underline{m}ain \underline{g}rid ($p_{i}^{\mathrm{mg}} \in \bb R^H$), and the power \underline{tr}aded with its neighbors $j \in \mc N_i$ ($p_{(i,j)}^{\mathrm{tr}} \in \bb R^H$), for all $j \in \mc N_i$. 
For simplicity of exposition, we assume that each prosumer only owns at most one dispatchable unit and/or one storage unit. 
Next, we present the model for these devices.

\smallskip
\paragraph*{Dispatchable units}
The objective function of a dispatchable unit, denoted by $f_{i}^{\mathrm{di}}: \mathbb{R}^{H} \to \mathbb{R}$, is typically a convex quadratic function   \cite{atzeni2013,hans2018,sorin2019}, e.g., 
\begin{equation}
f_{i}^{\mathrm{di}}(p_{i}^{\mathrm{di}}) = \|p_{i}^{\mathrm{di}}\|_{Q_i^{\mathrm{di}}}^2+ {(c_i^\mathrm{di})}^\top p_{i}^{\mathrm{di}}, \label{eq:f_dg}
\end{equation}
where $Q_i^{\mathrm{di}}\succcurlyeq0$ and $c_i^{\mathrm{di}}$ are constants.
Furthermore, the power generation $p_{i}^{\mathrm{di}}$ is limited by 
\begin{equation}
\begin{aligned}
\underline{p}_{i}^{\mathrm{di}}\1_{H} \leq p_{i}^{\mathrm{di}} &\leq \overline{p}_{i}^{\mathrm{di}}\1_{H}, \quad& &\forall i\in\mathcal{N}^{\mathrm{di}},\\
p_{i}^{\mathrm{di}} &= 0, \quad& &\forall i\notin\mathcal{N}^{\mathrm{di}},
\end{aligned}
\label{eq:p_dg_bound}	
\end{equation}
where $\overline{p}_{i}^{\mathrm{di}} > \underline{p}_{i}^{\mathrm{di}}\ge 0$ denote maximum and minimum total power production of the dispatchable generation unit, and $\mathcal{N}^{\mathrm{di}}\subseteq\mathcal{N}$ the subset of agents that own dispatchable units.

\paragraph*{Storage units} 
Each prosumer might also minimize the usage of its storage units, for instance, in order to reduce its degradation. The corresponding cost function is denoted by $f_{i}^{\mathrm{st}}:\mathbb{R}^{H} \to \mathbb{R}$ and defined as in \cite{hans2018} as follows:
\begin{equation}
f_{i}^{\mathrm{st}}(p_{i}^{\mathrm{ch}}, p_{i}^{\mathrm{ds}})= \|p_{i}^{\mathrm{ch}}\|_{Q_i^{\mathrm{st}}}^2 + \|p_{i}^{\mathrm{ds}}\|_{Q_i^{\mathrm{st}}}^2, \label{eq:f_st}
\end{equation}
where $Q_i^{\mathrm{st}}\succcurlyeq0$. {The battery charging and discharging profiles, $p_{i}^{\mathrm{ch}} = \col((p_{i,h}^{\mathrm{ch}})_{h\in \mc H})$ and $p_{i}^{\mathrm{ds}} = \col((p_{i,h}^{\mathrm{ds}})_{h\in \mc H})$, respectively, are constrained by the battery dynamics \cite{atzeni2013,zhong2019online}},
\begin{equation}
	\begin{aligned}
			x_{i,h+1} &=\eta_i^{\mathrm{st}} x_{i,h} +\tfrac{T_{\mathrm{s}}}{e^{\mathrm{cap}}_{i}}(\eta_i^{\mathrm{ch}} p_{i,h}^{\mathrm{ch}} -(\tfrac{1}{\eta_i^{\mathrm{ds}}})p_{i,h}^{\mathrm{ds}}), 
			\underline{x}_i &\leq x_{i,h+1} \leq \overline{x}_i,  \qquad \forall i \in \mc N^{\mathrm{st}}, \forall h \in \mc H,\\
			{p}_{i}^{\mathrm{ch}} &\in [0,\overline{{p}}_{i}^{\mathrm{ch}}], \ \ {p}_{i}^{\mathrm{ds}} \in [0,\overline{{p}}_{i}^{\mathrm{ds}}],  \quad  \forall i \in \mc N^{\mathrm{st}},
	\\
		p_{i}^{\mathrm{ch}} &= 0, \ \ p_{i}^{\mathrm{ds}}= 0,\qquad \qquad \quad \  \forall i \notin \mc N^{\mathrm{st}},
	\end{aligned}
	\label{eq:x}
\end{equation}
where $x_{i,h} $ denotes the state of charge (SoC) of the storage unit at time $h \in \mc H$, 
{$\eta_i^{\mathrm{st}}, \eta_i^{\mathrm{ch}}, \eta_i^{\mathrm{ds}} \in (0,1]$ denote  the leakage coefficient of the storage, charging, and discharging efficiencies, respectively, while }  
$T_{\mathrm{s}}$ and $e^{\mathrm{cap}}_{i}$ denote sampling time and maximum capacity of the storage, respectively. Moreover, $\underline{x}_i,\overline{x}_i  \in [0,1]$  denote the minimum and the maximum SoC of the storage unit of prosumer $i$, respectively, whereas ${\overline p^{\mathrm{ch}}_i} \geq 0$ and ${\overline p^{\mathrm{ds}}_i} \geq 0$ denote the maximum charging and discharging power of the storage unit. Finally, we denote by  $\mathcal{N}^{\mathrm{st}}\subseteq\mathcal{N}$ 	the set of prosumers that own a storage unit.

\smallskip
\paragraph*{Local power balance} 
The local power balance of each prosumer $i \in \mc N$ is represented by the following equation:
\begin{equation}
p_{i}^{\mathrm{di}}+{p_{i}^{\mathrm{ds}}-p_{i}^{\mathrm{ch}}}+p_{i}^{\mathrm{mg}}+\sum_{j\in \mathcal{N}_i}p_{(i,j)}^{\text{tr}}=p_{i}^{\mathrm{d}}, \label{eq:l_pow_b}
\end{equation}
where $p_{i}^{\mathrm{d}} \in \mathbb{R}^{H}$ denotes the local power demand profile over the whole prediction horizon. The power demand $p_{i}^{\mathrm{d}}$ is defined as the difference between the aggregate load of prosumer $i$ and the power generated by its non-dispatchable generation units, e.g., solar or wind-based generators\footnote{If a component of $p_{i}^{\mathrm{d}}$ is positive, then the load is larger than the power produced by its non-dispatchable units.}.
Finally, it is worth mentioning that a prosumer that does not own a dispatchable nor storage unit can satisfy its power balance \eqref{eq:l_pow_b} by importing (trading) power from other prosumers and/or the main grid.

\smallskip
\paragraph*{Passive consumers}
In addition, we assume that some busses in the distribution network might also be connected to some (traditional) passive consumers that do not have storage nor dispatchable units, and do not trade with other prosumers. Let us denote the set of such passive consumers by $\mc P$. For each passive consumer $i \in \mc P$, its power demand $p_i^{\mathrm d}>0$ is balanced conventionally, namely, by importing power from the main grid. Nevertheless, these passive loads will play a role in the trading process between prosumers and main grid, and in the power-balance equations of the physical network.
\color{black}

\subsection{Modelling the P2P trading}
\label{sec:mod_tr}
 
In this section, we present the cost and constraints of bilateral tradings between prosumers.

\smallskip
\paragraph*{Power traded with neighbors}
Recall that each prosumer $i \in \mc N$ has a set of trading partners denoted by $\mc N_i$. The corresponding cumulative trading cost is given by
\begin{equation}
f_{i}^{\mathrm{tr}} \left( \{ p_{(i,j)}^{\mathrm{tr}} \}_{j \in \mc N_i} \right) = 
\1_H^\top \sum_{j\in \mathcal{N}_i} \left(
c_{(i,j)}^{\mathrm{tr}}p_{(i,j)}^{\mathrm{tr}}\!+\! c^{\mathrm{ta}}  |p_{(i,j)}^{\mathrm{tr}}|\right)
, \label{eq:f_t}
\end{equation}%
where $p_{(i,j)}^{\mathrm{tr}} \in \R^{H}$ is the power that prosumer $i$ trades with prosumer $j$, $c_{(i,j)}^{\mathrm{tr}} \geq 0$ is the per-unit cost of trading \cite{lecadre2020}, and $c^{\mathrm{ta}}$ is a tariff imposed by the DNO for using the network \cite{baroche2019}. In  practice, the parameters $c_{(i,j)}^{\mathrm{tr}}$ can be agreed through a bilateral contract \cite{sousa2019}, model taxes to encourage the development of certain technologies  {or be used for the purpose of product differentiation} \cite{sorin2019,baroche2019,lecadre2020}. Furthermore, for each P2P trade it must hold that
\begin{subequations}
\begin{align}
-\overline{p}_{(i,j)}^{\mathrm{tr}}\1_H \leq p_{(i,j)}^{\mathrm{tr}} &\leq \overline{p}_{(i,j)}^{\mathrm{tr}}\1_H, & & \forall j\in\mathcal{N}_i, \label{eq:p_t_bound}\\
p_{(i,j)}^{\mathrm{tr}}  + p_{(j,i)}^{\mathrm{tr}}&=0, & & \forall j\in\mathcal{N}_i, \label{eq:rep_const}
\end{align}
\label{eq:ptr_cons}%
\end{subequations}
where $ \overline{p}_{(i,j)}^{\mathrm{tr}}$ denotes the maximum power can be traded with neighbor $j$. Equations \eqref{eq:rep_const}, commonly known as \textit{reciprocity constraints} \cite{sousa2019}, impose the agreement on the power trades.

\paragraph*{Power traded with the main grid} Let $p_{i,h}^{\mathrm{mg}}$ be the power prosumer $i$ imports from the main grid at time $h \in \mc H$. As in \cite{atzeni2013}, we assume that the electricity unit price at each time step $h \in \mc H	$ depends on the total consumption,
\begin{equation}
c_h^{\mathrm{mg}}(\sigma^{\textrm{mg}}_h)= d_h^{\mathrm{mg}}\cdot {\left(\sigma^{\textrm{mg}}_h + b_h \right)}^2,
\end{equation}
where $d_h^{\mathrm{mg}}$ is a positive price parameter, whereas $\sigma^{\textrm{mg}}_h$ and $b_h$ denote the \textit{aggregate active} and \textit{passive} \textit{load on the grid}, i.e., 
\begin{equation} \label{eq:Aggr}
\sigma^{\textrm{mg}}_h = \sum_{i\in\mathcal{N}} p_{i,h}^{\mathrm{mg}},
\quad
b_h =\sum_{i\in \mathcal P}p_{i,h}^{\mathrm{d}},
 \quad \forall h \in \mathcal H.
\end{equation} 
Therefore, the total cost incurred by prosumer $i$, over the horizon $\mc H$, for trading with the main grid is given by
\begin{equation}
\begin{aligned} \textstyle
f_{i}^{\mathrm{mg}}\left( p_{i}^{\mathrm{mg}},\sigma^{\textrm{mg}} \right) &= \sum_{h\in \mc H}c_h^{\mathrm{mg}}(\sigma^{\textrm{mg}}_h) \, \frac{p_{i,h}^{\mathrm{mg}}}{\sigma_h^{\textrm{mg}} + b_h}\\
&= \sum_{h\in \mc H}  d_h^{\mathrm{mg}}(\sigma^{\textrm{mg}}_h+b_h) \, p_{i,h}^{\mathrm{mg}},
\label{eq:f_mg}
\end{aligned}
\end{equation}
We note that the cost function \eqref{eq:f_mg} assumes equal electricity price at each distribution node and the consideration of power losses and congestion, which may result in different price at different node, is left for future work.  

Finally, we bound the aggregative loads \eqref{eq:Aggr} as follows:
\begin{align}
\underline{p}^{\mathrm{mg}} \1_H \leq \sigma^{\textrm{mg}} + b  \leq \overline{p}^{\mathrm{mg}} \1_H   , \label{eq:p_mg_bound}
\end{align}
where $\overline{p}^{\mathrm{mg}}>\underline{p}^{\mathrm{mg}} \geq 0$ denote the upper and lower bounds. Typically, the latter is  positive to ensure a continuous operation of the main generators that supply the main grid.

\subsection{Modelling the physical constraints}
To ensure that the solutions to our decentralized market design are operationally-safe and reliable for the entire system, we impose the physical constraints of the electrical network, namely, power-flow-related constraints.

Firstly, recall that $\mc G^{\mathrm p}=(\mc B, \mc L)$ is a graph representation of the physical electrical network that connects the prosumers. We  denote by $\mc B_y =\{z~|~(y,z) \in \mc L\}$ the set of neighbouring busses of bus $y \in \mc B$, whereas we denote by $\mc N_y^{\mathrm b} \subseteq \mc N$ and $\mc P_y^\mathrm{b} \subseteq \mc P$ the set of prosumers and passive consumers that are connected to bus $y \in \mc B$, respectively. 
Additionally, we denote the set of busses connected to the main grid by $\mc B^{\mathrm{mg}} \subseteq \mc B$.

Secondly, we define decision variables, for each bus $y \in \mc B$, which are used to define the physical constraints. Denote by $v_y \in \bb R^H$ and $\theta_y \in \bb R^H$ the voltage magnitude and angle over $\mc H$.  Moreover, $p^{\mathrm{tg}}_y \in \mathbb{R}^H$ denotes the real power exchanged between bus $y \in \mc B$ and the main grid, whereas $p_{(y,z)}^{\ell}$ and $q_{(y,z)}^{\ell} \in \bb R^H$, for each $m \in \mc B_y$, denote the real and reactive powers of line $(y,z) \in \mc L$ over $\mc H$, respectively. 

\smallskip
We consider a linear approximation of power-flow equations, which is standard in the literature of P2P markets, e.g., \cite{yang2019,moret2020}. Specifically, for each bus $y \in \mc B$, it must hold that
\begin{equation}
\sum_{i \in \mc P_y^\mathrm{b}}p_i^{\mathrm{d}}+ \sum_{i\in \mc N_y^{\mathrm b}} \eta_i-p_{y}^{\mathrm{tg}}=\sum_{z\in \mc B_y}p_{(y,z)}^{\ell}, \label{eq:p_bal_p}
\end{equation}
where $\eta_i$ is the active power injection of prosumer $i$, i.e.,
\begin{equation}\label{eq:etai}
\eta_i:= p_i^{\mathrm d}-p_{i}^{\mathrm{di}}-{p_{i}^{\mathrm{ds}}+p_{i}^{\mathrm{ch}}}.
\end{equation}
Equation \eqref{eq:p_bal_p} models the local power balance of bus $y$, similarly to \eqref{eq:l_pow_b} although now it relates power generation, consumption, and line powers. Moreover, it must hold that\begin{subequations}
\small
\label{eq:pf_pq}
\begin{align}
&p_{(y,z)}^{\ell} = B_{(y,z)}\left({\theta_y - \theta_z} \right) - G_{(y,z)}\left(v_y - v_z \right), \quad \forall z \in \mc B_y, \label{eq:pf_p}\\
&q_{(y,z)}^{\ell} = G_{(y,z)}\left({\theta_y - \theta_z} \right) + B_{(y,z)}\left(v_y - v_z \right), \quad  \forall z \in \mc B_y, \label{eq:pf_q}
\end{align}
\label{eq:pf_all}%
\end{subequations}
which represent the power flow equations of line $(y,z)$ from the perspective of bus $y$, with $B_{(y,z)}$ and  $G_{(y,z)}$ denoting the susceptance and conductance, respectively, of line $(y,z)$. Note that by \eqref{eq:pf_p} and \eqref{eq:pf_q}, for each pair $(y,z) \in \mc L$, it holds that $p_{(y,z)}^{\ell} = {-p}_{(z,y)}^{\ell}$ and $q_{(y,z)}^{\ell} = {-q}_{(z,y)}^{\ell}$. 

\smallskip
We also impose reliability constraints for each bus $y \in \mc B$, 
\begin{subequations}
\label{eq:line}
\begin{align}
	(p_{(y,z),h}^{\ell})^2 + (q_{(y,z),h}^{\ell})^2 &\leq \overline{s}_{(y,z)}^2,\quad \forall z\in\mc B_y,  \forall h\in \mc H, \label{eq:line_cap}\\
	\underline{\theta}_y\1 \leq \theta_y &\leq \overline{\theta}_y\1,  \label{eq:theta_b}\\
	\underline{v}_y\1 \leq v_y &\leq \overline v_y\1, \label{eq:v_b}
\end{align}
\end{subequations}
where \eqref{eq:line_cap} represents the line capacity constraint at each line, with maximum capacity of line $(y,z) \in \mc L$ denoted by $\overline{s}_{(y,z)}$, and \eqref{eq:theta_b}-\eqref{eq:v_b} represent the bounds of the voltage phase angles and magnitudes, respectively, with $\underline{\theta}_y \leq \overline{\theta}_y$ denoting the minimum and maximum phase angles and $\underline{v}_y\leq \overline v_y$ denoting the minimum and maximum voltage magnitude. Note that, when linearizing the power flow equations, we take one of the busses as reference bus. Without loss of generality, we suppose the reference is bus $1$ and assume $\underline{\theta}_1 = \overline{\theta}_1=0$.  \color{black}

Finally, the power exchanged with the main grid must satisfy the following constraints:
\begin{subequations}
\begin{align}
	p^{\mathrm{tg}}_{y} &= 0,  &\forall y \notin \mc B^{\mathrm{mg}}, \label{eq:grid_ex}\\
	\sigma^{\textrm{mg}}_h + b_h &= \sum_{y\in\mc B}p^{\mathrm{tg}}_{y,h},  &\forall h \in \mathcal H, \label{eq:grid_const2}
\end{align}
\end{subequations}
where \eqref{eq:grid_ex} is imposed by definition that the busses that are not directly connected with the main grid do not exchange  power with the main grid, whereas \eqref{eq:grid_const2} ensures that the power traded by the prosumers with the main grid (in the trading network) corresponds to the power exchanged between the whole distribution network and the main grid.

\section{A Distributed Market-clearing Mechanism}
\label{sub:alg}
\subsection{Market-Clearing Game and Variational Equilibria}
By letting the physical variables of the distribution network be handled by a DNO (i.e., agent $N\!+\!1$), the P2P market clearing problem can be compactly written as the problem of finding the optimal strategy profiles $u_i^*$'s in \eqref{eq:locOpt}, for all $i \in \mc N^+$, where the decision variable $u_i$ is defined as
$$u_i \hspace{-2pt} = \hspace{-2pt}\begin{cases} \col\hspace{-2pt}\left(p_{i}^{\mathrm{di}},{p_{i}^{\mathrm{ch}},p_{i}^{\mathrm{ds}}, } p_{i}^{\mathrm{mg}},\{p_{(i,j)}^{\mathrm {tr}}\}_{j\in \mc N_i}\right)\hspace{-2pt}, \ \qquad \qquad \;  \forall  i \in \mc N,\\
	\col\hspace{-2pt}\left(\{\theta_y, v_y, p_y^{\mathrm{tg}},\{p_{(y,z)}^{\ell},q_{(y,z)}^{\ell}\}_{z \in \mc B_y} \}_{y\in\mc B} \right)\hspace{-2pt},  \; \quad i\!=\! N\!+\!1;
		\end{cases}$$ 
the cost function is defined as
	\begin{multline}
	J_i(u_i,u_{-i}) =
f_{i}^{\mathrm{di}}(p_{i}^{\mathrm{di}}) + f_{i}^{\mathrm{st}}({p_{i}^{\mathrm{ch}},p_{i}^{\mathrm{ds}}})+ f_{i}^{\mathrm{tr}} \left( \{ p_{(i,j)}^{\mathrm{tr}} \}_{j \in \mc N_i} \right)\\
    + f_{i}^{\mathrm{mg}}\left( p_{i}^{\mathrm{mg}},\sigma^{\textrm{mg}} \right), \quad \forall  i \in \mc N,
 \label{eq:cost_def}
	\end{multline}
whereas\footnote{Here, we assume that the DNO does not have preferences on the outcome, provided that it is a feasible solution for the grid.} $J_{N+1}=0$; the local action set is
\begin{align}
\label{eq:constr_def}
\mc U_i =
\left\{
\begin{array}{l r}
\left\{u_i \,|~ \text{\eqref{eq:p_dg_bound}, \eqref{eq:x}, \eqref{eq:l_pow_b}, \eqref{eq:p_t_bound} hold} \, \right\}, & \forall i \in \mc N, \\
\left\{ u_i \,|~ \text{\eqref{eq:pf_pq},\eqref{eq:line}, \eqref{eq:grid_ex} hold} \,\right\}, & i\!=\!N\!+\!1;
\end{array}
\right.
\end{align}
and finally, the set of coupling constraints is 
\begin{equation}
\label{eq:Cconstr_def}
\mc C = \left\{
u~|~ \text{\eqref{eq:rep_const}, \eqref{eq:p_mg_bound}, \eqref{eq:p_bal_p}, \eqref{eq:grid_const2} hold}
\right\}.
\end{equation}

\begin{remark}
The definitions of $J_i$, $\mc U_i$, $\mc C$ in \eqref{eq:cost_def}, \eqref{eq:constr_def}, \eqref{eq:Cconstr_def} satisfy Assumption~\ref{as:gen}. Moreover, these definitions can be expanded by incorporating additional cost terms, for example, related to the degradation of storage units and constraints (e.g. ramping constraints of dispatchable generation units), as long as Assumption \ref{as:gen} remains satisfied.  {Additionally, instead of linear power flow equations in \eqref{eq:pf_all}, a nonlinear convex relaxation, such as a second order cone or semi-definite programming as discussed in \cite[Sect. II-A]{molzahn2017} can be considered since it still satisfies Assumption 1. In this case, only the definition of $\mc U_{N+1}$ differs from the current formulation.}\eod
\end{remark}

From a game-theoretic perspective, the collection of inter-dependent optimization problems in \eqref{eq:locOpt} constitute a \textit{generalized game}, and a set of decisions $\{u_1^*,\ldots,u_{N+1}^*\}$ that simultaneously satisfy \eqref{eq:locOpt}, for all $i \in \mc N^+$, corresponds to a GNE \cite[\S~2]{facchinei201012}.
In other words, a set of strategies $\{u_1^*,\ldots,u_{N+1}^*\}$ is a GNE if no agent $i \in \mc N^+$ (prosumers and DNO)  can reduce its cost  function $J_{i}(u_{i}^*, u_{-i}^*)$ by unilaterally changing its strategy $u_i^*$ to another feasible one.
{Among all GNEs, we target the special subclass of \textit{variational} GNEs (v-GNEs) that coincides with the solutions to a specific variational inequality GVI$(\mc K, P)$ \cite[Prop.~12.4]{facchinei201012},  i.e., the problem of finding a pair of vectors $(u^*,z^*)$, such that $u^* \in \mc K$, $z^* \in P(u^*)$, and
\begin{align*}
{(u - u^*)}^\top z^* \geq 0, \quad \forall u  \in  {\mc K}, 
\end{align*}
where the mapping $P(u^*): = \prod_{i \in \mc N^+} \frac{\partial}{\partial u_i} J_i(u_i^*,u_{-i}^*)$ is the so-called \textit{pseudo-subdifferential}, and $\mc K:= \mc C \cap (\prod_{i \in \mc N^+} \mc U_i)$ is the global feasible set.
v-GNEs enjoy the property of ``economic fairness”, namely, the marginal loss due to the presence of the coupling constraints is the same for each agent, see e.g. \cite{kulkarni2012variational}. For these reasons, v-GNEs have been used to model desirable (i.e., efficient, strategically stable, fair, and safe) configurations in several distributed engineering systems, including P2P energy market models, see e.g. \cite{lecadre2020}. In this paper, we focus on computational aspects, namely, the design and analysis of a fast and scalable decentralized v-GNE seeking algorithm for the P2P market game \eqref{eq:locOpt}, while we study the properties of its v-GNEs numerically rather than analytically.}

Note that the cost functions in \eqref{eq:cost_def} are coupled only via the aggregative quantity $\sigma^{\textrm{mg}} = \sum_{i\in\mathcal{N}} p_{i}^{\mathrm{mg}}$ in  \eqref{eq:Aggr}, namely, the active load (i.e., the congestion) on the main grid. Therefore, for each agent $i \in \mc I$, we can define a function $\tilde J_i$ such that
\begin{equation}
\tilde J_i(u_i, \sigma^{\textrm{mg}}) :=J_i(u_i,  u_{-i}).
\end{equation}
Games with such special structure are known as \textit{aggregative games} \cite{jensen2010aggregative}, and have received intense research interest, within the operations research and the automatic control communities \cite{paccagnan2019,belgioioso2020semi,belgioioso2020b,
gadjov2020single,bianchi2020fast}. 
{
When the agents' cost functions depend linearly on the congestion (as for our P2P market model) v-GNEs are efficient (in terms of social welfare). Specifically, the so-called \textit{price of anarchy}\cite{koutsoupias1999worst}, which quantifies how much selfish behaviour degrades the performance of a given system, tends to one (i.e., no performance degradation) as the agents population size grows unbounded \cite{paccagnan2018efficiency}.}

\subsection{Semi-decentralized Market Clearing}
Several semi-decentralized and distributed algorithms have been recently proposed to find a solution of the generalized aggregative game in \eqref{eq:locOpt}, e.g. \cite{paccagnan2019,belgioioso2020semi,belgioioso2020b,
gadjov2020single,bianchi2020fast}. 
{%
Among these methods, we focus on semi-decentralized ones \cite{belgioioso2020semi}, in which the agents (i.e., prosumers) rely on a reliable central coordinator (i.e., the DNO) that gathers local variables in aggregative form and then broadcasts (incentive) signals for coordination purposes. 

In this paper, we exploit the special linear coupling structure in the cost functions \eqref{eq:cost_def} and coupling constraints \eqref{eq:Cconstr_def} to tailor Algorithm~6 in \cite{belgioioso2020semi} for our P2P market game.
Unlike most of the available semi-decentralized pseudo-gradient-based methods, \cite[Algorithm~6]{belgioioso2020semi} relies on proximal updates that are computationally more expensive but greatly mitigate the overall communication burden between agents and coordinator.
The resulting market-clearing mechanism, summarized in Algorithm~1, requires the prosumers and the DNO to store, update, and communicate some additional (dual and auxiliary) variables, whose primary function is to coordinate the system towards operational feasibility and trading reciprocity. In particular, each prosumer $i\in\mc N$ stores in its local memory
\begin{itemize}
\item the local strategy $u_i$ that collects the power generation, storage (charging/discharging), load, and trading profiles;

\item the active power injection $\eta_i$, defined as in \eqref{eq:etai} and privately communicated (as a grid usage bid) to the DNO;

\item a (dual) variable $\mu^{\text{tr}}_{(i,j)}$ for each trading partner $j \in \mc N_i$, whose function is to drive prosumer $i$'s and $j$'s power trades to agreement, i.e., the reciprocity constraints \eqref{eq:rep_const}, and can be interpreted from an economic perspective as a bilateral trading shadow price \cite[Section 2.4]{lecadre2020}.

\end{itemize}
In addition to the physical variables of the distribution network, i.e., $u_{N\!+\!1}$, the DNO stores in its local memory
\begin{itemize}
\item the (dual) variables $\lambda^{\text{mg}}$ and $\mu^{\text{tg}} $, that are associated with the main grid constraints \eqref{eq:p_mg_bound} and \eqref{eq:grid_const2}, respectively;

\item a (dual) variable $\mu_{y}^{\text{pb}}$ for each bus $y \in \mc B$, associated with the power balance constraint on bus $y$ \eqref{eq:p_bal_p}.
\end{itemize}
From an economic perspective, these variables can be interpreted as extra marginal losses imposed to the prosumers for the grid usage. From a control-theoretic perspective, they can be interpreted as states of discrete-time integrators driven by the violation of the network operational constraints \eqref{eq:Cconstr_def}.
}
\begin{figure}[t!]%
\begin{minipage}{\columnwidth}
\hrule
\smallskip
\textsc{Algorithm 1}. Semi-decentralized P2P Market Clearing
\smallskip
\hrule 
\smallskip
\text{Initialization}: For all prosumers $i \in \mc N$: set $\mu^{\text{tr}}_{(i,j)}(-1) = 0$, $\forall j \in \mc N_i$. DNO: set $\lambda^{\text{mg}}(0)\!=\!0$, $\mu^{\text{tg}}(0) \!=\! 0$, $\mu_y^{\text{pb}}(0)\! = \!0$, $\forall y \!\in\! \mc B$.

\smallskip
 \text{Iterate until convergence $(k = 0,1,\ldots)$}
\begin{align*}
& \text{For all prosumers $i \in \mc N$ (in parallel):}\\
\quad &\left\lfloor
\begin{array}{l}
\text{Local update via \textsc{Algorithm 2}:}\\
\left\lfloor
\begin{array}{l}
\text{Set $\{\mu_{(i,j)}(k)\}_{j \in \mc N_i}$ as in \textsc{Alg. 2} (i)}\\
\text{Set $u_i(k\!+\!1)$ as in \textsc{Alg. 2} (ii)}
\end{array}
\right.\\[1em]
 \text{Communication}\\
 \left\lfloor
\begin{array}{l}
\eta_i(k\!+\!1), \; p_i^{\text{mg}}(k\!+\!1) \rightarrow \text{ DNO}\\[.5em]
\text{For all trading partners } j \in \mc N_i \text{ (in parallel):}\\
 \left\lfloor
\begin{array}{l}
p_{(i,j)}^{\text{tr}}(k\!+\!1) \rightarrow \text{ prosumer } j
\end{array}
\right.
\end{array}
\right.
\vspace*{.1em}
\end{array}
\right.\\[.5em]
& \text{Distribution Network Operator (DNO)}\\
\quad &\left\lfloor
\begin{array}{l}
\text{Central update via \textsc{Algorithm 3}:}\\
\left\lfloor
\begin{array}{l}
\text{Set $u_{N+1}(k\!+\!1)$ as in \textsc{Alg. 3} (i)}\\
\text{Set $\lambda^{\text{mg}}$, $\mu^{\text{tg}}$, $\{ \mu_y^{\text{pb}}(k\!+\!1)\}_{y \in\mc B}$ as in \textsc{Alg. 3} (ii)}
\end{array}
\right.\\[1em]
\text{Communication (Broadcast)}\\
 \left\lfloor
\begin{array}{l}
\sigma^{\mathrm{mg}},\, \lambda^{\text{mg}}, \mu^{\text{tg}}(k\!+\!1) \rightarrow \text{ all prosumers } i \in \mc N\\[.5em]
\text{For all busses } y \in \mc B \text{ (in parallel):}\\
 \left\lfloor
\begin{array}{l}
\mu_y^{\text{pb}}(k+1) \rightarrow \text{ all prosumers } i \in \mc N_y^{\text{b}} \text{ on bus } y%
\end{array}%
\right.%
\end{array}%
\right.
\vspace*{.1em}
\end{array}
\right.%
\end{align*}
\smallskip
\hrule
\end{minipage}

\vspace*{2.4em}

\centering
\includegraphics[width=.36\textwidth]{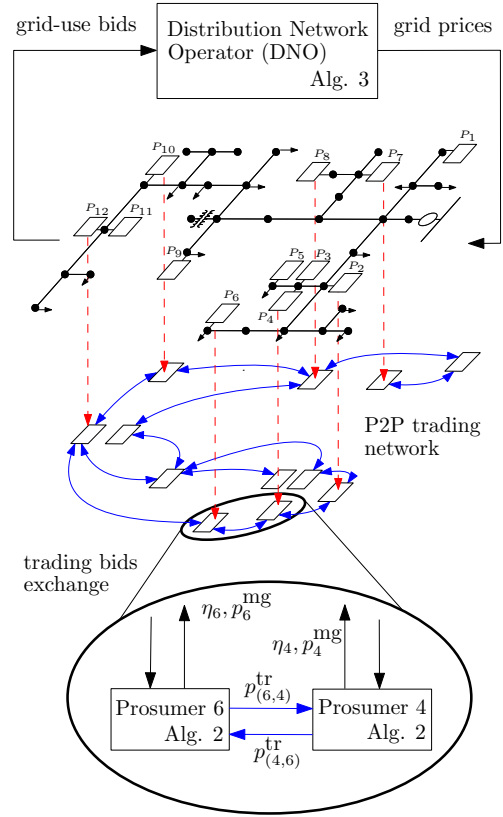}
\smallskip
\caption{Information flow in Algorithm 1.}
\label{fig:IFC}
\end{figure}%

The semi-decentralized information flow of Algorithm~1 is illustrated in Figure~\ref{fig:IFC}, while its locals and central updates are summarized  in Algorithm~2 and Algorithm~3, respectively. In there, we used some auxiliary variables (e.g., $\zeta^{\text{tr}}_{(i,j)}$, $\psi_i$ in Algorithm~2) to keep the presentation compact.  
The next proposition shows the global convergence of Algorithm 1 to a variational GNE of the proposed P2P market game. {Due to space limitations, we provide only a sketch of the proof that is mainly based on the technical results in \cite[Theorem~2]{belgioioso2020semi}.}

\smallskip
\begin{proposition} The following statements hold true:\hspace*{3em}
	\label{prp:conv}
\begin{enumerate}[(i)]
\item There exists a v-GNE of the P2P market game \eqref{eq:locOpt}.
\item The sequence $\left(u_1(k),\ldots,u_{N+1}(k)\right)_{k \in \bb N}$ generated by Algorithm~1 converges to a v-GNE of \eqref{eq:locOpt}.
\end{enumerate}
\end{proposition}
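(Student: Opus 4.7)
\emph{Proof plan.} My plan is to recast the computation of a v-GNE as a monotone inclusion and then invoke the general semi-decentralized convergence result \cite[Theorem~2]{belgioioso2020semi}. First, I would write the KKT system of the variational inequality GVI$(\mc K, P)$ that characterizes v-GNEs of \eqref{eq:locOpt}. Under Assumption~\ref{as:gen}, and in particular Slater's CQ, a v-GNE exists if and only if there is a zero $(u^*,\mu^*,\lambda^*)$ of the operator $T$ obtained by augmenting the pseudo-subdifferential $P$ with the normal cones to the local sets $\mc U_i$ and to the dual cones of the coupling constraints \eqref{eq:rep_const}, \eqref{eq:p_mg_bound}, \eqref{eq:p_bal_p}, \eqref{eq:grid_const2}.

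The crucial step is verifying monotonicity of $P$. Here I would exploit the aggregative structure of the costs in \eqref{eq:cost_def}: each $J_i$ depends on the other agents only through the aggregate $\sigma^{\mathrm{mg}}$ appearing in $f_i^{\mathrm{mg}}$, and this dependence is \emph{linear} in $p_i^{\mathrm{mg}}$ for fixed $\sigma^{\mathrm{mg}}$. Consequently, the pseudo-subdifferential decomposes into a block-diagonal part, which is monotone since every $J_i(\cdot,u_{-i})$ is convex by Assumption~\ref{as:gen}, plus a coupling block of the form $\mathrm{diag}(d^{\mathrm{mg}})\otimes \1\1^\top$, which is positive semidefinite. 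Monotonicity of $P$ follows, so existence of $u^*$ (claim (i)) can be concluded via standard VI existence results, combined with the closedness and convexity of $\mc K$ and nonemptiness coming from Slater's CQ.

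For claim (ii), the plan is to show that Algorithm~1 is an instance of \cite[Algorithm~6]{belgioioso2020semi}, specialized to the trading and physical graphs $(\mc G^{\mathrm{t}}, \mc G^{\mathrm{p}})$. Algorithm~2 implements a proximal step on the agent Lagrangian, with the nonsmooth $\ell_1$ term in $f_i^{\mathrm{tr}}$ absorbed by the proximal map and the multipliers $\mu^{\mathrm{tr}}_{(i,j)}$ driving the reciprocity constraints to feasibility; Algorithm~3 performs the dual ascent on \eqref{eq:p_mg_bound}, \eqref{eq:p_bal_p}, \eqref{eq:grid_const2} using the aggregate signals broadcast by the prosumers, and updates the DNO's physical variables $u_{N+1}$. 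Invoking \cite[Theorem~2]{belgioioso2020semi} then yields convergence to a zero of $T$, provided the step-sizes $\alpha, \beta, \gamma$ are chosen so that an appropriate preconditioner $\Phi$, built from the incidence matrices of $\mc G^{\mathrm{t}}$ and $\mc G^{\mathrm{p}}$, is positive definite.

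I expect the main obstacle to be the explicit verification of this step-size/preconditioner condition. Since the coupling block of $P$ is only PSD, not PD, and the $\ell_1$ trading term prevents any global strong monotonicity, the contraction argument cannot rely on strong monotonicity; it must instead go through the averaged-operator formalism of \cite[\S~23--26]{bauschke2011convex}, which requires choosing $\Phi$ so that the preconditioned forward-backward operator is $\Phi$-averaged. Working out bounds on $\alpha,\beta,\gamma$ that are explicit in the Laplacians of $\mc G^{\mathrm{t}}$ and $\mc G^{\mathrm{p}}$ is the technical core of the argument, but it reduces to a routine instantiation of the abstract conditions in \cite{belgioioso2020semi} once the monotonicity of $P$ and the block structure of the constraint operators are in place.
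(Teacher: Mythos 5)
Your proposal follows essentially the same route as the paper: exploit the linear aggregative coupling to show the pseudo-subdifferential is monotone (a block-diagonal part that is monotone by convexity of each $J_i(\cdot,u_{-i})$ plus a positive-semidefinite coupling matrix built from $\diag(d^{\mathrm{mg}})$), then verify the assumptions of and invoke \cite[Theorem~2]{belgioioso2020semi} for convergence of the preconditioned proximal-point iteration. The only point to tighten is existence: closedness, convexity, and nonemptiness of $\mc K$ do not by themselves guarantee solvability of a merely monotone variational inequality --- the paper additionally relies on boundedness of the local sets $\mc U_i$ (via \cite[Lemma~1]{belgioioso2020semi}), so your claim~(i) argument should explicitly add compactness of the feasible set or a coercivity condition.
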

\begin{proof} 
See Appendix \ref{app:CA}.
\end{proof}

\begin{figure}[t]
\begin{minipage}{\columnwidth}
\hrule
\smallskip
\textsc{Algorithm 2}. Local update of Prosumer $i$
\smallskip
\hrule 
\smallskip
\text{Step sizes}: For each $i \in \mc N$, set $\alpha_{i} <1/(3+N\max_{h\in \mc H} d_h^{\mathrm{mg}})$, $\beta_{(i,j)}^{\text{tr}}=\beta_{(j,i)}^{\text{tr}} < 1/2$, for all $j \in \mc N_i$.

\medskip
\noindent 
(i) \text{Dual update (trading reciprocity):}\\[.2em]
\hspace*{.15em}
$
\left\lfloor
\begin{array}{l}
\text{For all } j \in \mc N_i \text{ (in parallel):}\\
\left\lfloor
\begin{array}{l}
\zeta^{\text{tr}}_{(i,j)}(k)  =  p^{\textrm{tr}}_{(i,j)} (k) + p^{\textrm{tr}}_{(j,i)} (k)\\
\mu^{\text{tr}}_{(i,j)}(k) = \mu^{\text{tr}}_{(i,j)}(k) + \beta_{(i,j)}^{\text{tr}} \left( 
	2  \zeta^{\text{tr}}_{(i,j)}(k) - \zeta^{\text{tr}}_{(i,j)}(k\!-\!1)
	\right)
\end{array}
\right.
\vspace*{.1em}
\end{array}
\right.
$

\medskip
\noindent
(ii) \text{Primal update (generation, storage, load, and trading):}\\[.2em]
\hspace*{.15em}
$
\left\lfloor
\begin{array}{l}
\psi_i(k) = u_i(k)-
	\alpha_i \cdot \col \Big(-\mu_y^{\text{pb}}(k),\mu_y^{\text{pb}}(k), -\mu_y^{\text{pb}}(k),\\
	\hspace*{5em}
	{
	\left[
	\begin{smallmatrix}
	I_H\\
	- I_H 
	\end{smallmatrix}
	\right] 
	}^\top \lambda^{\text{mg}}(k) + \mu^{\text{tg}}(k),
	\left\{ {\mu^{\text{tr}}_{(i,j)}(k)}\right\}_{j \in \mc N_i}   \Big)\\
\text{Set } u_i(k\!+\!1) \text{ as the unique solution to}\\[.2em]	
\left\{
\begin{array}{r l}
	\underset{\xi \in \R^{n_i}}{\argmin} & 
	 J_{i} \big( \xi, u_{-i}(k) \big) 
	+ \frac{1}{2 \alpha_i}
	\left\| \xi -  \psi_i(k) 
	\textstyle
	  \right\|^2 \\
	\text{s.t.} & \xi \in \mc U_i
\end{array} 
	\right.
	\vspace*{.1em}
\end{array}	
\right.		$

\medskip
\hrule
\end{minipage}
\end{figure}
\begin{figure}[t]
\begin{minipage}{\columnwidth}
\hrule
\smallskip
\textsc{Algorithm 3}. DNO central update
\smallskip
\hrule 
\smallskip

\text{Step sizes}: set $\alpha_{N+1} <2$, $\gamma^{\text{mg}} < 1/N$, $\beta^{\text{tg}} < (|\mc N| \!+\! |\mc B|)^{-1}$, and $\beta_y^{\text{pb}} < (1\!+\!2|\mc N_y^{\mathrm{b}}|\!+\!|\mc B_y|)^{-1}$,  for all busses $y\!\in\! \mc B $.

\medskip
\noindent 
(i) \text{Primal update (grid physical variables):}\\[.2em]
\hspace*{.15em}
$\left\lfloor
\begin{array}{l}
\psi(k) = \col \left( 
	\left\{
	\0,\mu^{\text{tg}}(k) + \mu_y^{\text{pb}}(k), 
	\{ \mu_y^{\text{pb}}(k), \0 \}_{z \in \mc B_y}
	\right\}_{y \in \mc B} \right)\\[.7em]
u_{N+1}(k+1) = \proj_{\mc U_{N+1}}  \left( u_{N+1}(k) + (\alpha_{N+1})^{-1} \psi(k) \right)
\vspace*{.1em}
\end{array}
\right.
$

\medskip
\noindent
\noindent 
(ii) \text{Dual update (operational feasibility):}\\[.2em]
\hspace*{.15em}
$
\left\lfloor
\begin{array}{l}
\delta(k\!+\!1) = 
	\left[
	\begin{smallmatrix}
	1\\
	- 1 
	\end{smallmatrix}
	\right] \otimes (2 \sigma^{\text{mg}}(k\!+\!1)\!-\! \sigma^{\text{mg}}(k)) 
	-
	\left[
	\begin{smallmatrix}
	\overline{p}^{\mathrm{mg}}\1_{H}-b \\
	-     \underline{p}^{\mathrm{mg}} \1_{H}+b
	\end{smallmatrix} 
	\right]\\
	\lambda^{\text{mg}}(k+1) = \textstyle
	\proj_{\R^{2 H}_{\geq 0}}\left( 
	\lambda^{\text{mg}}(k) + \gamma^{\text{mg}} \delta(k\!+\!1)
	 \right)\\
\zeta^{\text{tg}}(k+1) = \sigma^{\text{mg}}
	(k\!+\!1)+b- \sigma^{\text{tg}}(k\!+\!1)\\
	\mu^{\text{tg}}(k+1) = \mu^{\text{tg}}(k) + \beta^{\text{tg}} (2\zeta^{\text{tg}}(k+1)-\zeta^{\text{tg}}(k))\\[.5em]
\text{For all busses } y \in \mc B \text{ (in parallel):}\\
\left\lfloor
\begin{array}{l}
\zeta^{\text{pb}}_y(k+1)   = \sum_{i \in \mc P_y^{\mathrm{b}}} p_i^{\mathrm{d}} + \sum_{i \in \mc N_y^{\mathrm{b}}} \eta_i(k\!+\!1)\\
\hspace*{8em} - p^{\text{tg}}_y(k\!+\!1) - \sum_{z \in \mc B_y} p^\ell_{(y,z)} (k\!+\!1)\\[.5em]
\mu_y^{\text{pb}}(k+1) = \mu_y^{\text{pb}}(k) + \beta^{\text{pb}}_y (2 \zeta^{\text{pb}}_y(k\!+\!1)- \zeta^{\text{pb}}_y(k))
\end{array}
\right.
\vspace*{.2em}
\end{array}
\right.
$

\medskip
\hrule
\end{minipage}
\end{figure}

\begin{remark} The main properties of the proposed market-clearing mechanism (Algorithms 1-3) are listed below:
\begin{enumerate}[(i)]

\item The step sizes in the local and central updates (i.e., Algorithms 2 and 3) are fully-uncoordinated, i.e., they can differ across prosumers and DNO, and can be chosen independently based on local information only;

\item The primal update of each prosumer (Algorithm~2~(ii)) involves the solution of a quadratic program\footnote{Up to a fairly-standard reformulation of the absolute value term in \eqref{eq:f_t}.}, for which very efficient solvers are available, e.g. \cite{osqp}. {In there, if $   J_{i} \big( \xi, u_{-i}(k) \big)$ is replaced by its approximate version $ \tilde J_{i} \big( \xi, \sigma^{\text{mg}}(k) \big)$, obtained by neglecting prosumer $i$'s contribution $p_i^{\text{tr}}$ to the aggregative active load $\sigma^{\text{mg}}$, Algorithm 1 will converge to a variational Wardrop equilibrium \cite[\S~II.B]{belgioioso2020semi}, which is a good approximation of v-GNEs for networks with a large number of prosumers.}

\item  The primal update of the DNO (Algorithm~3~(i)) requires projecting onto $\mc U_{N+1}$, which is a convex but nonlinear set. This operation is computationally expensive if naively solved. However, more efficient ad-hoc algorithms to calculate $\proj_{\mc U_{N+1}}$ can be designed using \textit{best approximation methods} \cite[\S~30]{bauschke2011convex}, e.g., see Appendix \ref{APA}.
\item {Algorithm 1 can be recast as a proximal-point method opportunely preconditioned to distribute the computation among the prosumers \cite[\S~IV]{belgioioso2020semi}.
Such operator-theoretic interpretation can be used to design provably-correct acceleration schemes \cite{belgioioso2020semi} as well as to provide robustness guarantees to asynchronous implementations \cite{cenedese2021asynchronous}.
}
\end{enumerate}
\end{remark}
\begin{remark}  {
Our proposed approach differs from community-based local markets \cite[Section 3.2]{sousa2019}, which also requires a coordinator that manages the trading activities. In our setup, each prosumer knows its trading partners and, thus, negotiates directly with them while the coordinator handles the physical constraints and aggregated power bought from the main grid.}
\end{remark}
\section{Numerical Studies}
\label{sub:case_std}
We perform an extensive numerical study on the IEEE 37-bus distribution network to validate the proposed game-theoretic market design and market-clearing algorithm. Specifically: (a) we evaluate the importance of having physical constraints in the model; (b) we evaluate the economical benefits of trading;  (c) we show how storage units owned by prosumers might affect power consumptions; and (d) we test the scalability of the proposed algorithm. All the simulations are carried out in \textsc{Matlab} and use the \texttt{OSQP} solver \cite{osqp} for solving the quadratic programming problems. 

In all simulations\footnote{The codes and data sets used for all simulations are available at \texttt{https://github.com/ananduta/P2Penergy/simulations}.}, we consider heterogeneous networks, where the power demand profile of a prosumer or passive user is either that of single household, multiple households, restaurant, office, hospital, or school. Moreover, some prosumers may have solar-based power generation. The demand and solar-based generation profiles are based on \cite{jasm}.
We also arbitrarily select a set of prosumers to own dispatchable generation units with different sizes and to own homogeneous storage units. We randomly generate the trading networks and place each prosumer and passive user in one of the busses of the IEEE 37-bus network. 

Some of the default cost parameters are set as in \cite{atzeni2013}, i.e., $Q_i^{\mathrm{di}}=0$, $c_i^{\mathrm{di}}=0.045\,  \text{\euro/kW}$, for all $i \in \mc N^{\mathrm{di}}$, $Q_i^{\mathrm{st}}=0$, $c_i^{\mathrm{st}}=0$, for all $i \in \mc N^{\mathrm{st}}$, and $d_h^{\mathrm{mg}}=0.1624/b_h \, \text{\euro/kW}$, whereas the trading cost parameters $c_{(i,j)}^{\mathrm{tr}}=0.08 \,  \text{\euro/kW}$, for all $(i,j) \in \mc E$, and $c^{\mathrm{ta}}=0.01 \,  \text{\euro/kW}$. The parameter $c_{(i,j)}^{\mathrm{tr}}$ is set larger than $c_i^{\mathrm{di}}$ to encourage trading between prosumers with and without dispatchable units, but is smaller than the average unit-price of importing power from the main grid. Note that, in some simulations, we vary these cost parameters. 
\subsection{Achieving operationally-safe solutions}
In the first simulation study, we compare the solutions obtained from solving a P2P market model with and without capacity constraints \eqref{eq:line_cap}. We specifically create an extreme case with $25$ prosumers, where the load of  prosumer $10$ (see Figure \ref{fig:simD_safe}) is very high. We solve both market designs using Algorithm~1. Figure \ref{fig:simD_safe} shows the resulting power-line saturations between busses for both designs. 
Some equilibrium solutions of the P2P market cause overcapacity in some lines when capacity constraints \eqref{eq:line_cap} are not taken into account in the model, as illustrated in Figure \ref{fig:simD_safe}~(b).
\begin{figure}[t]
	\centering
	\begin{subfigure}[t]{0.23\textwidth}
		\centering
		\includegraphics[scale=0.92]{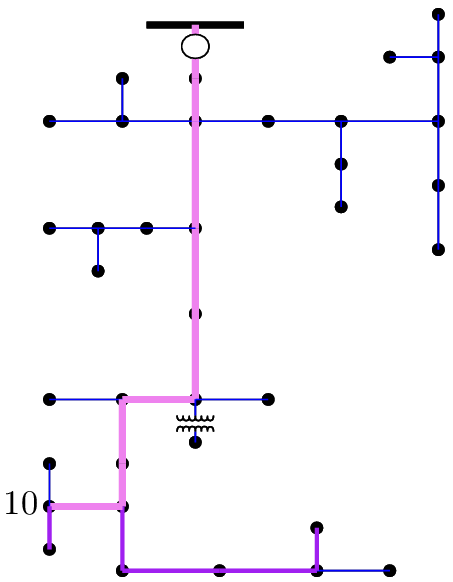}
		\caption{With physical constraints}
	\end{subfigure}%
	~ 
	\begin{subfigure}[t]{0.27\textwidth}
		\centering
		\includegraphics[scale=0.92]{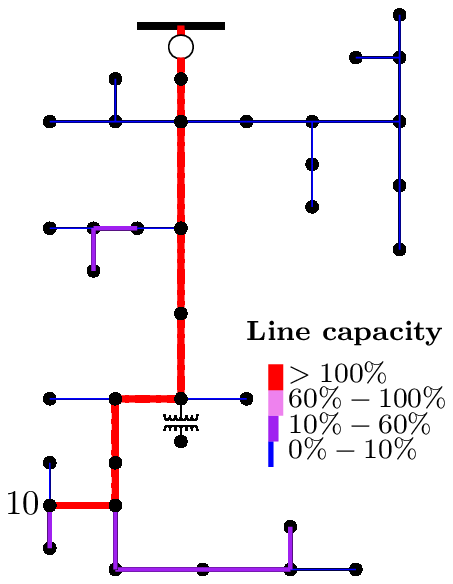}
		\caption{Without physical constraints}
	\end{subfigure}
	\caption{Power line capacities of the physical network. The solutions of the P2P market might cause overcapacity in some lines of the physical network when capacity constraints \eqref{eq:line_cap} are not taken into account.}
	\label{fig:simD_safe}
\end{figure}

\subsection{The impact of P2P trading}
In this section, we evaluate whether energy trading is economically beneficial for the prosumers. To this end, 
we generate a network of 50 prosumers and consider two scenarios: (a) where trading is not allowed, i.e., $\overline{p}_{(i,j)}^{\mathrm{tr}}=0$ in \eqref{eq:p_t_bound}; (b) where trading is allowed with $\overline{p}_{(i,j)}^{\mathrm{tr}}=30 \ \mathrm{kW}$, and the default cost parameters are homogeneous. The other parameters of the network are kept constant in both scenarios. Figure~\ref{fig:simC_cost} shows the individual costs difference between the equilibrium configurations of the market designs with (a) and without P2P tradings (b). In particular, all prosumers gain economical benefits when they can trade.

\begin{figure}[htbp]
	\centering
	\includegraphics[width=.92\linewidth]{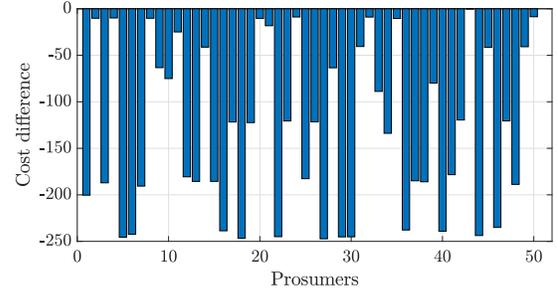}
	\caption{Total cost improvement (\euro) of each prosumer by trading ($c_{(i,j)}^{\mathrm{tr}}=0.08\text{\euro/kW}$).
	}
	\label{fig:simC_cost}
\end{figure}
\begin{figure}[htbp]
	\centering
	\includegraphics[width=.92\linewidth]{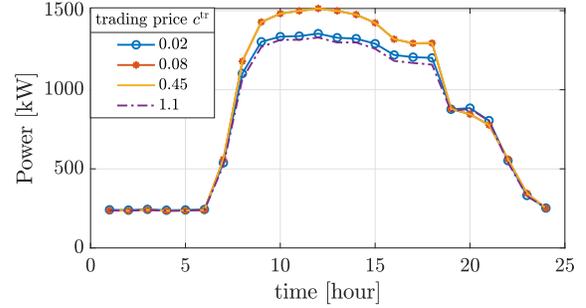}
	\caption{Aggregated P2P trading for different cost coefficients ($c_{(i,j)}^{\mathrm{tr}}$ in \euro/kW).
	}
	\label{fig:simC_ctr}
\end{figure}
\begin{figure}[t!]
	\centering
	\includegraphics[width=.92\linewidth]{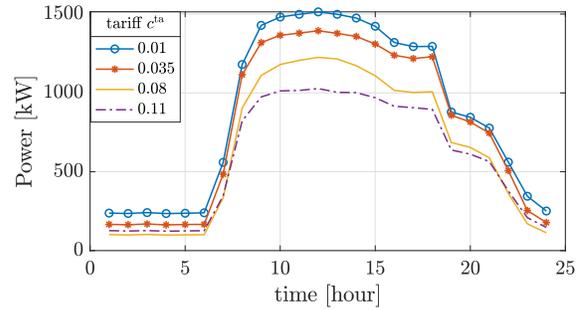}
	\caption{Aggregated P2P trading for different penalty coefficients ($c^{\mathrm{ta}}$ in \euro/kW).
	}
	\label{fig:simC_pen}
\end{figure}
Then, we evaluate the sensitivity of the total traded power with respect to the trading cost parameter $c_{(i,j)}^{\mathrm{tr}}$ and the trading tariff, $c^{\mathrm{ta}}$. Figure \ref{fig:simC_ctr} shows that   $c_{(i,j)}^{\mathrm{tr}}$ must be set appropriately to maximize trading among prosumers. In other words, when $c_{(i,j)}^{\mathrm{tr}}$ is either too high or low, trading is less attractive. 
On the other hand, the higher the tariff is, the less power is traded, as shown in Figure \ref{fig:simC_pen}. Therefore, the DNO may adjust this tariff to encourage or discourage trading in the network. Discouraging trading might be needed when the capacity of the network is close to its limit.

\subsection{The impact of storage units}
In this set of simulations, we investigate the advantages of distributed storage in the network. We generate a test case of 50-prosumer network and consider two extreme scenarios: (a) no prosumers own storage units and  (b) all prosumers own storage units. Furthermore, we also allow some of the prosumers to own distributed generation units, {whose cost functions are strongly convex quadratic, i.e., $Q_i^{\mathrm{di}}>0$, for all $i \in \mc N^{\mathrm{di}}$, which vary from one unit to another}. Figures \ref{fig:sim_B1}-\ref{fig:sim_B3} summarize the simulation results. From Figure \ref{fig:sim_B1}, we can see how the storage units help in shaving the peak of total power imported from the main grid and locally generated by distributed generators. Interestingly, the trading between prosumers is also affected, as shown by Figure \ref{fig:sim_B2}. From this plot, we observe that the existence of storage units reduce the total power traded during the peak hours as the prosumers have reserved energy in their storages. 
Note that the prosumers charge their storage units during the first off-peak hours by buying energy from the main grid and/or from other prosumers that own dispatchable generation units (see the first six hours of the bottom plot of Figure \ref{fig:sim_B1} and those of Figure \ref{fig:sim_B2}).  {Finally, Figure \ref{fig:sim_B3} compares the price of electricity from the main grid and the average price of bilateral trading (including the average of the shadow prices).  Most of the times, the trading prices are lower than the grid prices (in both scenarios), explaining the high amount of power traded.}

\begin{figure}[t]
	\centering
	\includegraphics[width=0.8\linewidth]{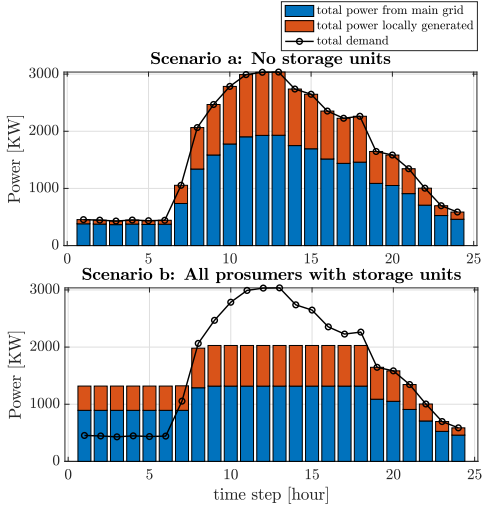}
	\caption{	Incorporating storage units causes a peak-shaving effect on the sum of the total power imported from the main grid and the power locally generated.}
	\label{fig:sim_B1}
\end{figure}
\begin{figure}[t]
	\centering 
	\includegraphics[width=0.92\linewidth]{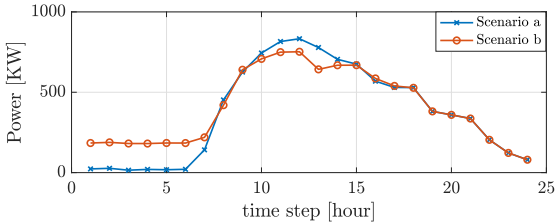}
	\caption{Aggregated P2P trading in scenarios (a) and (b).
	}
	\label{fig:sim_B2}
\end{figure}  
\begin{figure}[t]
	\centering 
	\includegraphics[width=0.92\linewidth]{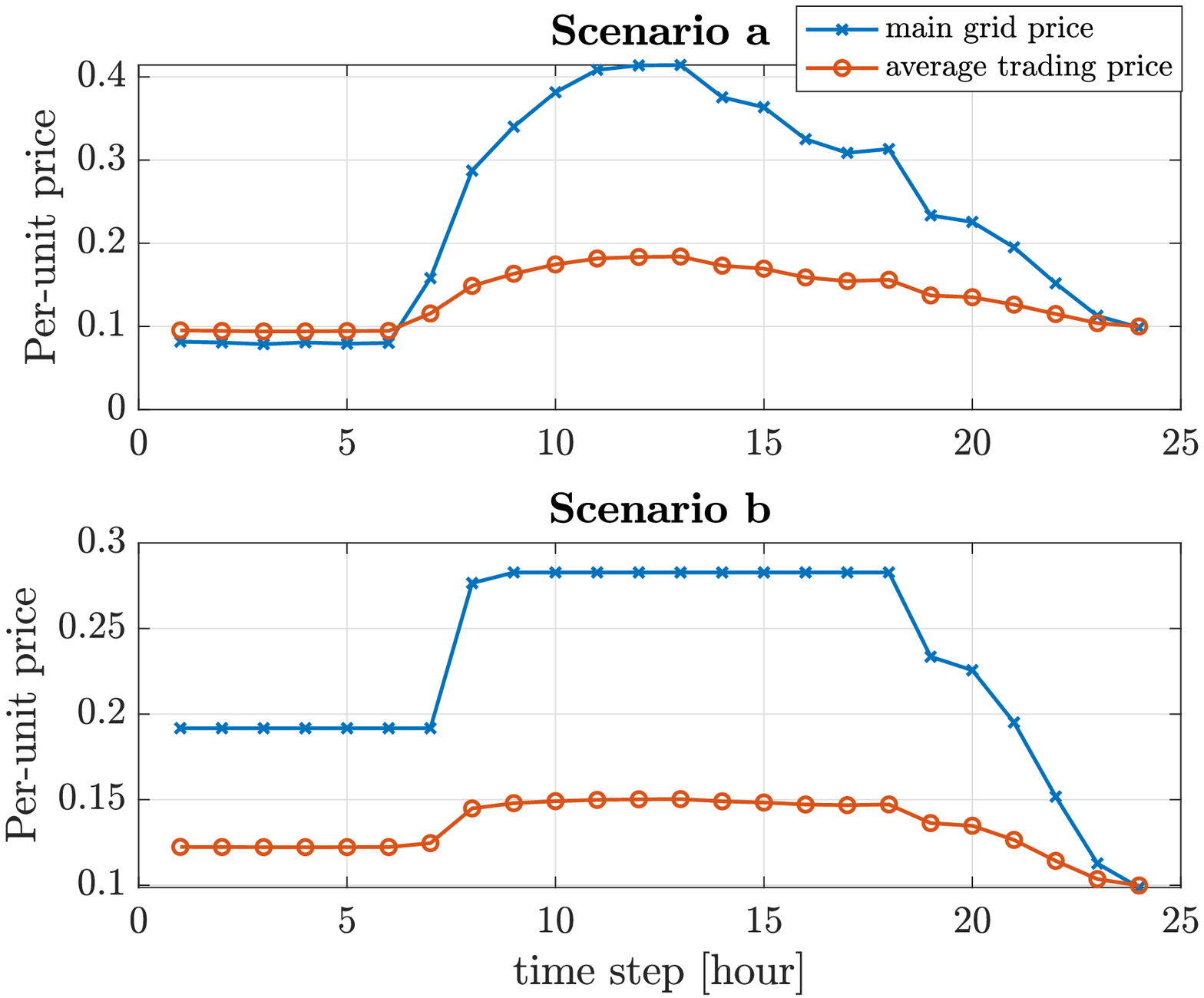}
	\caption{Comparison of the average electricity trading price ($c^{\mathrm{tr}} + c^{\mathrm{ta}} + \tfrac{1}{|\mc E|}\sum_{(i,j)\in \mc E}\mu_{(i,j)}^{\mathrm{tr}}$) with the electricity grid prices.
	}
	\label{fig:sim_B3}
\end{figure}

\subsection{Scalability of the market-clearing mechanism}
Finally, we perform a scalability test for the proposed algorithm. Specifically, we evaluate the convergence speed, in terms of the total number of iterations required to meet a predetermined stopping criterion, when the size of the population of prosumers $N$ and the connectivity of the trading network (the number of trading links) grow. We carry out two sets of simulations.
For the former, we consider five different values of $N$ and a fixed connectivity level of $0.6$ and we run ten Monte Carlo simulations for each $N$, whereas in the latter, the connectivity of the trading network of $50$ prosumers varies in the range $[0.1,1]$, where connectivity $1$ means that the trading network is a complete graph. Similarly, we also run ten Monte Carlo simulations for each connectivity value. We can see from Figure \ref{fig:simE} that Algorithm~1 suitably scales with respect to both the number of prosumers and the connectivity level of the trading network. These results highlight that our algorithm is suitable to be applied to  large-scale systems. 

\begin{figure}[t]
	\centering
	\includegraphics[width=0.9\linewidth]{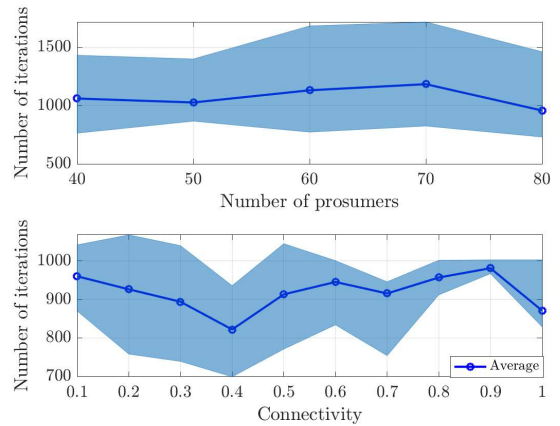}
	\caption{Total number of iterations for convergence of Alg. 1 vs number of prosumers (top) and the connectivity level (the number of trading links) (bottom). {The average computation times of the inner loops, i.e., Algs.~2 and 3, obtained on a computer with Intel Xeon E5-2637 3.5 GHz processors and 128 GB of memory, are 74.5 ms and 1.13 s, respectively.} 
	}
	\label{fig:simE}
\end{figure}

\section{Conclusion}
Energy management and P2P trading in future energy markets of prosumers can be formulated as a generalized game, where the network operator is an extra player in charge of handling the network operational constraints. A provably-convergent operationally-safe market-clearing mechanism is obtained by solving the game with a semi-decentralized Nash equilibrium seeking algorithm based on the proximal-point method.    
Numerical studies show that the computational complexity of the proposed mechanism is independent of the prosumer population size, and suggest that active participation in the market is economically advantageous both for prosumers and network operators.
Future research directions include: efficiently incorporating non-linear convex approximation of power flow in the algorithm; handling the physical constraints in a fully-distributed manner, i.e., without the action of a network operator; and dealing with uncertainties in the model, e.g., renewable energy production, as well as those from information exchange processes required by our algorithm.

\appendix
\subsection{Algorithm 1: Derivation and Convergence Analysis}
\label{app:CA}
The derivation and convergence analysis of Algorithm 1 relies (for the most part) on the \textit{customized preconditioned proximal-point (cPPP)} algorithm for generalized aggregative games proposed in \cite[Algorithm~6]{belgioioso2020semi}. The objective of this appendix is to show that the proposed market-clearing game \eqref{eq:locOpt}, with cost functions and constraints sets defined in \eqref{eq:cost_def}-\eqref{eq:Cconstr_def}, satisfies all the technical conditions in \cite[Theorem~2]{belgioioso2020semi}, among which is the existence of a variational GNE, i.e., item (i) of Proposition \ref{prp:conv}. Therefore, we invoke \cite[Theorem~2]{belgioioso2020semi} to prove convergence of Algorithm 1, i.e., item (ii) of Proposition \ref{prp:conv}. For a complete convergence analysis of the cPPP algorithm for aggregative games we refer to \cite[Appendix~C]{belgioioso2020semi}.

\smallskip
\subsubsection*{Aggregative cost functions} First, we show that the cost functions \eqref{eq:cost_def} can be cast as in\cite[Eqn.~(30)]{belgioioso2020semi}, i.e.
\begin{equation}
	\label{eq:cFcPPP}
	J_i(u_i, u_{-i}) = g_i(u_i) + {(C \avg( u))}^\top u_i,
\end{equation}
where $\avg( u) := \frac{1}{N}\sum_{i \in \mc N} u_i$ denotes the average strategy.
Let $\mc N_i = \mc N$, for all $i \in \mc N$, without loss of generality\footnote{For example, by defining, for all $i \in \mc N$, the ``dummy variables" $\{p^{\text{tr}}_{(i,j)}\}_{j \in \mc N \setminus {\mc N_i}}$ for all the prosumers that do not trade with $i$.}. In this case, $u_i \in \R^{(3+N)H}$, for all $i \in \mc N$. Moreover, let $\Xi^{\text{mg}} \in \R^{H \times (3+N)H}$ denote the matrix that selects the $p_{i}^{\text{mg}}$-component from the decision vectors $u_i$'s, and define the matrix $D:= N \diag(d^{\text{mg}}_1, \ldots, d^{\text{mg}}_H)$, where $d^{\text{mg}}_h$ is the price coefficient for the main grid power. Then, the cost functions in \eqref{eq:cost_def} can be recast as \cite[Eqn.~(30)]{belgioioso2020semi}, or \eqref{eq:cFcPPP}, with
\begin{subequations}
	\label{eq:pieces_cPPP}
	\begin{align}
		g_i(u_i) &= \nonumber
		f_{i}^{\mathrm{di}}(p_{i}^{\mathrm{di}}) + f_{i}^{\mathrm{st}}(p_{i}^{\mathrm{st}})+ f_{i}^{\mathrm{tr}} \left( \{ p_{(i,j)}^{\mathrm{tr}} \}_{j \in \mc N_i} \right),\\[-.2em]
		\label{eq:gi_cPPP}
		&
		\textstyle
		\quad +\frac{1}{N}(Db)^\top p_i^{\text{mg}},\\[.2em]
		C &= (\Xi^{\text{mg}})^\top D \, \Xi^{\text{mg}}.
		\label{eq:C_cPPP}
	\end{align}
\end{subequations}

\subsubsection*{Technical assumptions} Next, we show that all the assumptions in \cite[Theorem~2]{belgioioso2020semi} are satisfied.
\begin{enumerate}[(i)]
	\item For all $i \in \mc N^+$, the cost function $J_i(u_i, u_{-i})$ in \eqref{eq:gi_cPPP} is convex in $u_i$, since all the components of $g_i$ are convex. Hence, \cite[Assumption~1]{belgioioso2020semi} holds.
	\item For all $i \in \mc N^+$, the local set $\mc U_i$ in \eqref{eq:constr_def} is nonempty, closed and convex. Moreover, Slater's constraint qualification on the global feasible set $\left( \prod_{i \in \mc N^+} \mc U_i \right) \cap \mc C$ holds under an appropriate choice of the parameters. Therefore, \cite[Assumption~2]{belgioioso2020semi} is satisfied.
	\item The \textit{pseudo-subdifferential} mapping of the game \eqref{eq:locOpt} reads as $F: u \mapsto \prod_{i \in \mc N}\left(\partial_{u_i} J_i(u_i, u_{-i})\right)\times \0$, since $J_{N+1}=0$. It follows by \cite[Corollary~1]{belgioioso2017convexity}, that the first term of $F$, i.e., $u \mapsto \prod_{i \in \mc N}\partial_{u_i} J_i(u_i, u_{-i})$, is maximally monotone \cite[Definition~20.20]{bauschke2011convex}, since  $C$ in \eqref{eq:C_cPPP} is positive semidefinite, i.e., $C=(\Xi^{\text{mg}})^\top D \, \Xi^{\text{mg}} \succeq 0$. Moreover, also the second term of $F$, i.e., the zero mapping $\0$, is maximally monotone. Therefore, it follows by \cite[Proposition~20.23]{bauschke2011convex} that their cartesian product $\prod_{i \in \mc N}\left(\partial_{u_i} J_i(u_i, u_{-i})\right)\times \0 = F$ is maximally monotone. Hence, \cite[Assumption~6]{belgioioso2020semi} holds.
	\item By \cite[Lemma~1~(i)]{belgioioso2020semi}, there exists a variational GNE of the game in \eqref{eq:locOpt}, since the constraint sets $\mc U_i$ in \eqref{eq:constr_def} are bounded, and the pseudo-subdifferential mapping $F$ is monotone. Hence, \cite[Assumption~4]{belgioioso2020semi} is satisfied.
\end{enumerate} 
{\hfill $\blacksquare$}

\subsection{Alternating Projection for Operational Feasibility}
\label{APA}
In this appendix, we propose an efficient algorithm to compute the projection onto the set $\mathcal{U}_{N+1}$  (Algorithm~3~(i)). First, let us recall the structure of $u_{N+1}$, i.e.,
$$u_{N+1}=\col\left(\{\theta_y, v_y, p_y^{\mathrm{tg}},\{p_{(y,z)}^{\ell},q_{(y,z)}^{\ell}\}_{z \in \mc B_y} \}_{y\in\mc B} \right),$$ and let us define the sets
\begin{align}
	\mc S_1 &:= \{u_{N+1} \mid \text{\eqref{eq:line} and \eqref{eq:grid_ex} hold}\},\\
	\mc S_2 &:= \{u_{N+1} \mid \text{\eqref{eq:pf_p} and \eqref{eq:pf_q} hold}\},
\end{align}
such that $\mathcal{U}_{N+1} = \mc S_1 \cap \mc S_2$. 
The proposed method, summarized in Algorithm~\ref{alg:alg2}, is essentially a \textit{Douglas--Rachford splitting} (DRS) \cite[\S~26.3]{bauschke2011convex} applied to the \textit{best approximation problem} $ {\argmin}_{\xi \in \mc S_1 \cap \mc S_2}  \|\xi-u_{N+1}\| =\proj_{\mc S_1 \cap \mc S_2}(u_{N+1}) $, see e.g. \cite[\S~4.3]{bauschke2015projection} for a formal derivation of the algorithm.

Unlike $\mc U_{N+1}$, the projections onto $\mc S_1$ and $\mc S_2$ have closed-form expressions, hence Algorithm \ref{alg:alg2} only involves elementary operations. Specifically,
$	\proj_{\mc S_1}(u_{N+1})=
u_{N+1}^+$, where
\begin{align*}
	\theta_y^+ &=
	\begin{cases}
		\underline{\theta}_y, & \text{if } \theta_y < \underline{\theta}_y\\
		\overline{\theta}_y, & \text{if } \theta_y > \overline{\theta}_y\\
		\theta_y, & \text{otherwise } \
	\end{cases}, \quad \
	v_y^+ = 
	\begin{cases}
		\underline{v}_y, & \text{if } v_y < \underline{v}_y\\
		\overline{v}_y, & \text{if } v_y > \overline{v}_y\\
		v_y, & \text{otherwise } \
	\end{cases},
	\\
	{p_y^{\text{tg}}}^+ &= 
	\begin{cases}
		p_y^{\text{tg}}, & \text{if } y \in \mc B^{\text{mg}}\\
		0, & \text{otherwise} 
	\end{cases},
\end{align*}
and for all $y \in \mc B$, $z \in \mc B_z$, and  $h \in \mc H$
\begin{align*}
	\begin{array}{l}
		L_{(y,z),h} = \max \left\{ \|\col(p^\ell_{(y,z),h} , q^\ell_{(y,z),h} \|, \, \overline{s}_{(y,z)} \right\},\\[.2em]
		{(p^\ell_{(y,z),h})}^+  = \tfrac{\overline{s}_{(y,z)}}{
			L_{(y,z),h}
		}\,  p^\ell_{(y,z),h}, \\[.2em]
		{(q^\ell_{(y,z),h})}^+  = \tfrac{\overline{s}_{(y,z)}}{
			L_{(y,z),h}
		}\, q^\ell_{(y,z),h}.
	\end{array}
\end{align*}
Whereas, since $\mc S_2$ is an affine set, a closed-form expression for $\proj_{\mc S_2}$ is given in \cite[Example 29.17(ii)]{bauschke2011convex}.

\algblockdefx[init]{init}{Endinit}
[1][<default value>]{\textbf{Initialization}}
[2][<default value>]{\textbf{end initialization}}

\algblockdefx[DNO]{DNO}{EndDNO}
[1][<default value>]{\textbf{DNO routine}}
[2][<default value>]{\textbf{end DNO routine}}

\algblockdefx[PRO]{PRO}{EndPRO}
[1][<default value>]{\textbf{Prosumer $i$ routine}}
[2][<default value>]{\textbf{end prosumer $i$ routine}}

\algblockdefx[Primal]{Primal}{EndPrimal}
[1][<default value>]{\textbf{primal update}}
[2][<default value>]{\textbf{end}}

\algblockdefx[Dual]{Dual}{EndDual}
[1][<default value>]{\textbf{dual update}}
[2][<default value>]{\textbf{end}}

\algblockdefx[Aux]{Aux}{EndAux}
[1][<default value>]{\textbf{auxiliary update}}
[2][<default value>]{\textbf{end}}

\algblockdefx[Comm]{Comm}{EndComm}
[1][<default value>]{\textbf{communication}}
[2][<default value>]{\textbf{end communication}}

\algblockdefx[Agg]{Agg}{EndAgg}
[1][<default value>]{\textbf{aggregation update}}
[2][<default value>]{\textbf{end}}

\algblockdefx[IUC]{IUC}{EndIUC}
[1][<default value>]{\textbf{While convergence is not achieved do:}}
[2][<default value>]{\textbf{end while}}
\medskip 
\begin{minipage}{\columnwidth}
\hrule
\smallskip
\textsc{Algorithm 4}. {DRS for computing $\proj_{\mc U_{N+1}}(u_{N+1})$}
\smallskip
\hrule 
\smallskip
	\begin{algorithmic}[1]
		
		\smallskip
		\State Initialize $\xi(0) \in \bb R^{n_{N+1}}$, and set $\eta \in (0,2)$
		\IUC{ }
		
		\smallskip
		\State
		$z(k) = \proj_{\mc S_1}(\frac{1}{2} \xi(k) + \frac{1}{2} u_{N+1})$

		\smallskip
		\State
		$\xi(k+1) = \xi(k) + \eta \left( \proj_{\mc S_2}    (2z(k)-\xi(k)) - z(k)
		\right)$
		\EndIUC
		
	\end{algorithmic}
	\medskip
\hrule
\end{minipage}

\smallskip	

\bibliographystyle{ieeetran}
\bibliography{ref}


\end{document}